\documentclass[11pt]{article}

\usepackage{times}
\usepackage{fullpage}
\usepackage{amsfonts,amssymb,amsmath,amsthm}
\usepackage{latexsym}

\def\01{\{0,1\}}

\newcommand{\eps}{\varepsilon}
\newcommand{\ket}[1]{|#1\rangle}

\newcommand{\ketbra}[2]{|#1\rangle\langle#2|}
\newcommand{\inp}[2]{\langle{#1}|{#2}\rangle} 
\newcommand{\inpc}[2]{\langle{#1},{#2}\rangle} 
\newcommand{\norm}[1]{{\left\|{#1}\right\|}}
\newcommand{\Var}{\mbox{Var}}
\newcommand{\sgn}{\mbox{sgn}}
\newcommand{\Inf}{\mbox{Inf}}
\newcommand{\E}{\mathbb{E}}
\newcommand{\Tr}{\mbox{\rm Tr}}

\newtheorem{theorem}{Theorem}
\newtheorem{lemma}{Lemma}


\begin{document}

\title{How Low Can Approximate Degree and Quantum Query Complexity be for Total Boolean Functions?\thanks{Supported
by the European Commission under the project QCS (Grant No.~255961).}}
\author{Andris Ambainis\thanks{University of Latvia, Riga. {\tt ambainis@lu.lv}}
\and
Ronald de Wolf\thanks{CWI and University of Amsterdam. {\tt rdewolf@cwi.nl}. Supported by a Vidi grant from the Netherlands Organization for Scientific Research (NWO).}
}
\date{}
\maketitle
\thispagestyle{empty}

\begin{abstract}
It has long been known that any Boolean function that depends on $n$ input variables
has both \emph{degree} and \emph{exact quantum query complexity} of $\Omega(\log n)$, and that this bound is
achieved for some functions.
In this paper we study the case of \emph{approximate degree} and \emph{bounded-error} quantum query complexity.
We show that for these measures the correct lower bound is $\Omega(\log n / \log\log n)$,
and we exhibit quantum algorithms for two functions where this bound is achieved.
\end{abstract}

\section{Introduction}

\subsection{Degree of Boolean functions}

The relations between Boolean functions and their representation as polynomials over various fields
have long been studied and applied in areas like circuit complexity~\cite{beigel:poly}, 
decision tree complexity~\cite{nisan&szegedy:degree,buhrman&wolf:dectreesurvey}, 
communication complexity~\cite{buhrman&wolf:qcclower,sherstov:dualpoly}, and many others.
In a seminal paper, Nisan and Szegedy~\cite{nisan&szegedy:degree} made a systematic
study of the representation and approximation of Boolean functions by real polynomials,
focusing in particular on the \emph{degree} of such polynomials.  
To state their and then our results, let us introduce some notation.
\begin{itemize}
\item Every function $f:\01^n\rightarrow\mathbb{R}$ has a unique representation as an $n$-variate multilinear polynomial over the reals, i.e., there exist real coefficients $a_S$ such that $f=\sum_{S\subseteq[n]} a_S\prod_{i\in S}x_i$.
Its \emph{degree} is the number of variables in a largest monomial: $\deg(f):=\max\{|S| : a_S\neq 0\}$.
\item We say $g$ \emph{$\eps$-approximates} $f$ if $|f(x)-g(x)|\leq\eps$ for all $x\in\01^n$.
The \emph{approximate degree} of $f$ is $\widetilde{\deg}(f):=\min\{\deg(g) : g \mbox{ $1/3$-approximates }f\}$.
\item For $x\in\01^n$ and $i\in [n]$, $x^i$ is the input obtained from $x$ by flipping the bit~$x_i$.
A variable $x_i$ is called \emph{sensitive} or \emph{influential} on $x$ (for $f$) if $f(x)\neq f(x^i)$.
In this case we also say $f$ \emph{depends} on $x_i$.
The \emph{influence} of $x_i$ (on \emph{Boolean} function $f$) is the fraction of inputs $x\in\01^n$ 
where $i$ is influential: $\Inf_i(f):=\Pr_x[f(x)\neq f(x^i)]$.
\item The \emph{sensitivity} $s(f,x)$ of $f$ at input $x$ is the number of variables that are influential on $x$,
and the sensitivity of $f$ is $s(f):=\max_{x\in\01^n}s(f,x)$.
\end{itemize}
One of the main results of~\cite{nisan&szegedy:degree} is that every function $f:\01^n\rightarrow\01$ that
depends on all $n$ variables has degree $\deg(f)\geq\log n-O(\log\log n)$
(our logarithms are to base~2). Their proof goes as follows.
On the one hand, the function $f_i(x):=f(x)-f(x^i)$ is a polynomial of degree 
at most $\deg(f)$ that is not identically equal to~0.  Hence by a version of the Schwartz-Zippel 
lemma, $f_i$ is nonzero on at least a $2^{-\deg(f)}$-fraction of the Boolean cube. 
Since $f_i(x)\neq 0$ iff $i$ is sensitive on $x$, this shows
\begin{equation}\label{eq:infdeg}
\Inf_i(f)\geq 2^{-\deg(f)}\mbox{~for every influential $x_i$.}
\end{equation}
On the other hand, with a bit of Fourier analysis (see Section~\ref{secprelim}) one can show
\[
\sum_{i=1}^n\Inf_i(f)\leq \deg(f)
\]
and hence 
\begin{equation}\label{eq:suminfdeg}
\mbox{there is an influential $x_i$ with~}\Inf_i(f)\leq \deg(f)/n.
\end{equation}
Combining (\ref{eq:infdeg}) and~(\ref{eq:suminfdeg}) implies $\deg(f)\geq \log n - O(\log\log n)$.
As Nisan and Szegedy observe, this lower bound is tight up to the $O(\log\log n)$ term for the
\emph{address function}: let $k$ be some power of~2, $n=k+\log k$, 
and view the last $\log k$ bits of the $n$-bit input as an address in the first $k$ bits. 
Define $f(x)$ as the value of the addressed variable. 
This function depends on all $n$ variables and has degree $\log k+1\leq\log n + 1$, 
because we can write it as a sum over all $\log k$-bit addresses, multiplied by the addressed variable.

\subsection{Approximate degree of Boolean functions}

Our focus in this paper is on what happens if instead of considering \emph{representation} by polynomials we consider \emph{approximation} by polynomials.  While Nisan and Szegedy studied some properties of approximate degree in their paper, they did not state a general lower bound for all functions depending on $n$ variables.
Can we modify their proof to work for approximating polynomials?
While (\ref{eq:suminfdeg}) still holds if we replace the right-hand side
by approximate degree, (\ref{eq:infdeg}) becomes much weaker.  
Since it is known that $\Inf_i(f)\geq 2^{-2s(f)+1}$~\cite[p.~443]{simon:pram} and 
$s(f)=O(\widetilde{\deg}(f)^2)$~\cite{nisan&szegedy:degree}, we have
\begin{equation}\label{eq:infapproxdeg}
\Inf_i(f)\geq 2^{-O(\widetilde{\deg}(f)^2)}\mbox{~for every influential $x_i$.}
\end{equation}
This lower bound on $\Inf_i(f)$ is in fact optimal.  For example for the $n$-bit OR-function
each variable has influence $(n+1)/2^n$ and the approximate degree is $\Theta(\sqrt{n})$.
Hence modifying Nisan and Szegedy's exact-degree proof will only give an $\Omega(\sqrt{\log n})$ bound on approximate degree.
Another way to prove that same bound is to use the facts that $s(f)=O(\widetilde{\deg}(f)^2)$ 
and $s(f)=\Omega(\log n)$ if $f$ depends on $n$ bits~\cite{simon:pram}.

In Section~\ref{seclowerbound} we improve this bound to $\Omega(\log n / \log\log n)$.
The proof idea is the following.  Suppose $P$ is a degree-$d$ polynomial
that approximates $f$.  First, by a bit of Fourier analysis we show that
there is a variable $x_i$ such that the function $P_i(x):=P(x)-P(x^i)$ (which has degree $\leq d$ and expectation~0)
has low variance. We then use a concentration result for low-degree polynomials to show that $P_i$ 
is close to its expectation for almost all of the inputs.
On the other hand, since $x_i$ has nonzero influence, (\ref{eq:infapproxdeg})
implies that $|P_i|$ must be close to~1 (and hence far from its expectation)
on at least a $2^{-O(d^2)}$-fraction of all inputs.  
Combining these things then yields $d=\Omega(\log n / \log\log n)$.

\subsection{Relation with quantum query complexity}

One of the main reasons that the degree and approximate degree of a Boolean function
are interesting measures, is their relation to the \emph{quantum query complexity} of that function.  We define $Q_E(f)$ and $Q_2(f)$ as the minimal query complexity
of \emph{exact} (errorless) and $1/3$-error quantum algorithms for computing~$f$,
respectively, referring to~\cite{buhrman&wolf:dectreesurvey} for precise definitions.

Beals et al.~\cite{bbcmw:polynomialsj} established the following lower bounds
on quantum query complexity in terms of degrees:
\[
Q_E(f)\geq\deg(f)/2\mbox{~~~~and~~~~}Q_2(f)\geq\widetilde{\deg}(f)/2.
\]
They also proved that classical deterministic
query complexity is at most $O(\widetilde{\deg}(f)^6)$,
improving an earlier 8th-power result of~\cite{nisan&szegedy:degree},
so this lower bound is never more than a polynomial off for total Boolean functions.
While the polynomial method sometimes gives bounds that are polynomially
weaker than the true complexity~\cite{ambainis:degreevsqueryj}, still many
tight quantum lower bounds are based on this method~\cite{aaronson&shi:collision,ksw:dpt-siam}.

Our new lower bound on approximate degree implies that $Q_2(f)=\Omega(\log n / \log\log n)$
for all total Boolean functions that depend on $n$ variables.\footnote{In contrast,
the \emph{classical} bounded-error query complexity is lower bounded by sensitivity~\cite{nisan&szegedy:degree} and hence always $\Omega(\log n)$.}
In Section~\ref{secupperbound} we construct two functions that meet this bound,
showing that $Q_2(f)$ can be $O(\log n / \log\log n)$ for a total function that depends on $n$ bits.
Since $Q_2(f)\geq\widetilde{\deg}(f)/2$, we immediately also get that $\widetilde{\deg}(f)$ 
can be $O(\log n / \log\log n)$.
Interestingly, the only way we know to construct $f$ with asymptotically minimal $\widetilde{\deg}(f)$ 
is through such quantum algorithms---this fits into the growing sequence of classical results 
proven by quantum means~\cite{drucker&wolf:qproofs}.

The idea behind our construction is to modify the address function (which achieves the smallest degree in the exact case).  
Let $n=k+m$. We use the last $m$ bits of the input to build a \emph{quantum addressing
scheme} that specifies an address in the first $k$ bits. The value of the function is
then defined to be the value of the addressed bit.
The following requirements need to be met by the addressing scheme:
\begin{itemize}
\item
There is a quantum algorithm to compute the index $i$ addressed by $y\in\01^m$, 
using $d$ queries to~$y$;
\item
For every index $i\in\{1, \ldots, k\}$, there is a string $y\in\01^m$ that addresses $i$
(so that the function depends on all of the first $k$ bits);
\item
Every string $y\in\01^m$ addresses one of $1, \ldots, k$ (so the resulting
function on $k+m$ bits is total);
\end{itemize}
In Section~\ref{secupperbound} we give two constructions of addressing schemes that address $k=d^{\Theta(d)}$ bits using
$d$ quantum queries. Each gives a total Boolean function on $n\geq d^{\Theta(d)}$ bits that 
is computable with $d+1 = O(\log n / \log \log n)$ quantum queries: 
$d$ queries for computing the address~$i$ and 1 query to retrieve the addressed bit $x_i$.%
\footnote{It is interesting to contrast this with ``quantum oracle interrogation''~\cite{dam:oracle}.
If we just allowed any $m$-bit address then this address could be recovered using roughly $m/2$ 
quantum queries~\cite{dam:oracle}, but not less~\cite{absw:nover2}. In other words, $d$ quantum queries could recover one of roughly $2^{2d}$ possible addresses. In the addressing schemes we consider here, where different $m$-bit strings can point to the same address, $d$~quantum queries can recover one of $d^{\Theta(d)}$ possible addresses.}

To summarize, all total Boolean functions that depend on $n$ variables have approximate degree and bounded-error quantum query complexity at least $\Omega(\log n / \log\log n)$, and that lower bound is tight for some functions.

\section{Approximate degree is $\Omega(\log n/\log\log n)$ for all total $f$}\label{seclowerbound}

\subsection{Tools from Fourier analysis}\label{secprelim}

We  use the framework of Fourier analysis on the Boolean cube.  We will just introduce what we need here,
referring to~\cite{odonnell:survey,wolf:fouriersurvey} for more details and references.
In this section it will be convenient to denote bits as $+1$ and $-1$, so a Boolean function will now be $f:\{\pm 1\}^n\rightarrow\{\pm 1\}$.
Unless mentioned otherwise, expectations and probabilities below
are taken over a uniformly random $x\in\{\pm 1\}^n$.

Define the inner product between functions $f,g:\{\pm 1\}^n\rightarrow\mathbb{R}$ as
\[
\inpc{f}{g}=\frac{1}{2^n}\sum_{x\in\{\pm 1\}^n}f(x)g(x)=\E[f\cdot g].
\]
For $S\subseteq[n]$, the function $\chi_S$ is the product (parity) of the variables indexed in $S$.
These functions form an orthonormal basis for the space of all real-valued functions on the Boolean cube.
The \emph{Fourier coefficients} of $f$ are $\widehat{f}(S)=\inpc{f}{\chi_S}$, and we can write $f$ in
its Fourier decomposition
\[
f=\sum_{S\subseteq[n]}\widehat{f}(S)\chi_S.
\]
The \emph{degree} $\deg(f)$ of $f$ is $\max\{ |S| : \widehat{f}(S)\neq 0\}$.
The \emph{expectation} or \emph{average} of $f$ is $\E[f]=\widehat{f}(\emptyset)$, and its \emph{variance}
is $\Var[f]=\E[f^2]-\E[f]^2=\sum_{S\neq\emptyset}\widehat{f}(S)^2$.
The \emph{$p$-norm} of $f$ is defined as
\[
\norm{f}_p=\E[|f|^p]^{1/p}.
\]  
This is monotone non-decreasing in $p$.
For $p=2$, Parseval's identity says 
\[
\norm{f}_2^2=\sum_S\widehat{f}(S)^2.
\]
For low-degree $f$, the famous Bonami-Beckner hypercontractive inequality 
implies that higher norms cannot be \emph{much} bigger than the 2-norm.\footnote{See for example~\cite[Lecture~16, Corollary 1.3]{odonnell:notes} or~\cite[after Theorem 4.1]{wolf:fouriersurvey} for a proof, and \cite[Chapter~5]{janson:gaussian} for more background on hypercontractivity.}

\begin{theorem}\label{thbonamibeckner}
Let $f$ be a multilinear $n$-variate polynomial.
If $q\geq 2$, then
\[
\norm{f}_q\leq (q-1)^{\deg(f)/2}\norm{f}_2.
\]
\end{theorem}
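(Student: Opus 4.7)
The plan is to prove the inequality by induction on the number of variables $n$. Decompose $f$ along the last variable as $f(x) = g(x') + x_n h(x')$ where $x' = (x_1,\ldots,x_{n-1})$. Since $f$ is multilinear of degree at most $d:=\deg(f)$, we have $\deg(g)\leq d$ and $\deg(h)\leq d-1$, and orthogonality of the characters $1$ and $x_n$ yields $\norm{f}_2^2 = \norm{g}_2^2 + \norm{h}_2^2$. The engine of the induction is the ``two-level'' inequality
\begin{equation}\label{eq:twolevel}
\norm{g + x_n h}_q^2 \leq \norm{g}_q^2 + (q-1)\norm{h}_q^2,
\end{equation}
valid for any $g,h$ on $\{\pm 1\}^{n-1}$ and $q\geq 2$, with the left-hand norm taken over all $n$ variables.

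Granted (\ref{eq:twolevel}), the induction hypothesis (applied with degree bounds $d$ and $d-1$ respectively) gives $\norm{g}_q^2\leq(q-1)^d\norm{g}_2^2$ and $(q-1)\norm{h}_q^2\leq(q-1)^d\norm{h}_2^2$, whose sum gives $\norm{f}_q^2\leq(q-1)^d\norm{f}_2^2$, closing the induction. To prove (\ref{eq:twolevel}), condition on $x'$ and integrate first over $x_n$; with $a=g(x')$ and $b=h(x')$, the inner expectation equals $\tfrac{1}{2}(|a+b|^q+|a-b|^q)$. I would then invoke the scalar two-point inequality
\begin{equation}\label{eq:twopoint}
\tfrac{1}{2}\bigl(|a+b|^q + |a-b|^q\bigr) \leq \bigl(a^2 + (q-1)b^2\bigr)^{q/2}\qquad(a,b\in\mathbb{R},\;q\geq 2)
\end{equation}
pointwise. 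Averaging over $x'$ and taking $q$-th roots yields $\norm{g+x_n h}_q^2\leq\norm{g^2+(q-1)h^2}_{q/2}$. Since $q/2\geq 1$ the functional $\norm{\cdot}_{q/2}$ is a genuine norm, so Minkowski gives $\norm{g^2+(q-1)h^2}_{q/2}\leq\norm{g^2}_{q/2}+(q-1)\norm{h^2}_{q/2}=\norm{g}_q^2+(q-1)\norm{h}_q^2$, establishing (\ref{eq:twolevel}).

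The main obstacle is the scalar inequality (\ref{eq:twopoint}), the only genuinely analytic step; everything else is orthogonality of characters plus Minkowski. I would normalize to $a=1$ and $0\leq b\leq 1$ (the regime $|a|<|b|$ reduces by swapping the roles of $a$ and $b$) and verify nonnegativity of
\[
\psi(b) := \bigl(1+(q-1)b^2\bigr)^{q/2} - \tfrac{1}{2}\bigl((1+b)^q + (1-b)^q\bigr)
\]
on $[0,1]$. One checks $\psi(0)=\psi'(0)=0$ and compares the Taylor series of the two terms coefficient-by-coefficient: the expansion of $(1+(q-1)b^2)^{q/2}$ in powers of $b^2$ dominates that of $\tfrac{1}{2}((1+b)^q+(1-b)^q)$ term by term for $q\geq 2$ (the two series agree at $q=2$, where equality holds throughout, and the gap grows with $q$). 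This completes the proof.
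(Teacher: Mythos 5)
The paper itself does not prove Theorem~\ref{thbonamibeckner}: the authors use it as a known consequence of the Bonami--Beckner inequality and point to \cite{odonnell:notes,wolf:fouriersurvey,janson:gaussian}. Your scaffolding is the standard textbook route and is sound: the decomposition $f=g+x_nh$ with $\deg(g)\le d$, $\deg(h)\le d-1$, the identity $\norm{f}_2^2=\norm{g}_2^2+\norm{h}_2^2$, the reduction of the two-level inequality to the scalar two-point inequality, and the Minkowski step for $\norm{\cdot}_{q/2}$ (valid since $q/2\ge 1$) are all correct, and the induction closes exactly as you say. The scalar inequality $\tfrac12(|a+b|^q+|a-b|^q)\le(a^2+(q-1)b^2)^{q/2}$ is indeed true for all real $q\ge 2$ (it is the two-point $(2,q)$-hypercontractivity statement), and your normalization and the swap reduction for $|a|<|b|$ are fine because $q\ge 2$.

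The gap is your proof of that scalar inequality: term-by-term domination of the Taylor series is false for general $q\ge 2$. It works when $q/2$ is a positive integer (then both series have nonnegative coefficients and $\binom{q/2}{k}(q-1)^k\ge\binom{q}{2k}$ can be checked directly), but for other $q$ the binomial series acquire negative coefficients and the comparison breaks. Concretely, at $q=3$ one has $\tfrac12\bigl((1+b)^3+(1-b)^3\bigr)=1+3b^2$, while $(1+2b^2)^{3/2}=1+3b^2+\tfrac32 b^4-\tfrac12 b^6+\cdots$, so the coefficient of $b^6$ on your ``dominating'' side is $-\tfrac12<0$; similarly for $q=2.5$, $k=3$ one gets $\binom{5/4}{3}(3/2)^3\approx-0.13<\binom{5/2}{6}\approx-0.005$. (The inequality itself still holds at these $q$ --- e.g.\ $(1+2b^2)^3\ge(1+3b^2)^2$ --- it just is not coefficientwise.) You also cannot quietly restrict to even integer $q$: the theorem is stated for all $q\ge2$, and the application in Theorem~\ref{thm:degreedconcentration} takes $q=t^{2/d}/e$, a general real (one could round $q$ to an even integer there at the cost of constants, but that proves a different statement than the one you set out to prove). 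So the one step you yourself identify as ``the only genuinely analytic step'' is exactly the step your argument does not establish; it needs a real proof, e.g.\ a careful calculus/derivative comparison in $b$ or the arguments in the sources the paper cites.
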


The main tool we use is the following concentration result for degree-$d$ polynomials (the degree-1 case is essentially the familiar Chernoff bound).  Its derivation from Theorem~\ref{thbonamibeckner} is folklore, see for example~\cite[Section~2.2]{dfko:fouriertails} or \cite[Theorem~5.4]{odonnell:survey}. For completeness we include the proof below.

\begin{theorem}\label{thm:degreedconcentration}
Let $F$ be a multilinear $n$-variate polynomial of degree at most $d$, 
with expectation~0 and variance $\sigma^2=\norm{F}_2^2$.
For all $t\geq(2e)^{d/2}$ it holds that
\[
\Pr[|F|\geq t\sigma]\leq \exp\left(-(d/2e)\cdot t^{2/d}\right).
\]
\end{theorem}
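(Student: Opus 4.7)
The plan is the standard moment-Markov argument driven by the hypercontractive inequality (Theorem~\ref{thbonamibeckner}). Let $q\geq 2$ be a parameter to be chosen at the end. First I would apply Markov's inequality to $|F|^q$:
\[
\Pr[|F|\geq t\sigma]=\Pr[|F|^q\geq (t\sigma)^q]\leq \frac{\E[|F|^q]}{(t\sigma)^q}=\frac{\norm{F}_q^q}{(t\sigma)^q}.
\]

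Next, since $\E[F]=0$ we have $\norm{F}_2=\sigma$, so Theorem~\ref{thbonamibeckner} gives $\norm{F}_q\leq (q-1)^{d/2}\sigma\leq q^{d/2}\sigma$. Substituting this into the Markov bound yields
\[
\Pr[|F|\geq t\sigma]\leq \left(\frac{q^{d/2}}{t}\right)^q.
\]

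Finally I would optimize $q$. The natural choice is $q=t^{2/d}/e$, which makes $q^{d/2}/t=e^{-d/2}$ and turns the bound into $e^{-qd/2}=\exp(-(d/2e)\cdot t^{2/d})$, exactly as claimed. The only side-condition required for Theorem~\ref{thbonamibeckner} is $q\geq 2$, and $t^{2/d}/e\geq 2$ is equivalent to $t\geq (2e)^{d/2}$, which is precisely the hypothesis on $t$.

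There is no real obstacle: the proof is a one-parameter optimization. The two small points to keep straight are that the mean-zero assumption is what lets us identify $\norm{F}_2$ with $\sigma$ (so no centering is needed before applying hypercontractivity), and that we use Theorem~\ref{thbonamibeckner} with a non-integer real $q$, which is fine because the inequality holds for every real $q\geq 2$.
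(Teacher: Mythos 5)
Your proof is correct and is essentially identical to the paper's own argument: Markov applied to $|F|^q$, hypercontractivity to bound $\norm{F}_q$ by $q^{d/2}\sigma$, and the choice $q=t^{2/d}/e$, with the hypothesis on $t$ ensuring $q\geq 2$. Nothing to add.
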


\begin{proof}
Theorem~\ref{thbonamibeckner} implies
\[
\E[|F|^q]=\norm{F}_q^q\leq(q-1)^{dq/2}\norm{F}_2^q=(q-1)^{dq/2}\sigma^q.
\]
Using Markov's inequality gives
\[
\Pr[|F|\geq t\sigma] = \Pr[|F|^q\geq (t\sigma)^q] 
\leq \frac{\E[|F|^q]}{(t\sigma)^q}\leq \frac{(q-1)^{dq/2}\sigma^q}{(t\sigma)^q}\leq\frac{q^{dq/2}}{t^q}.
\]
Choosing $q=t^{2/d}/e$ gives the theorem (note that our assumption on~$t$ implies $q\geq 2$).
\end{proof}

\subsection{The lower bound proof}

Here we prove our main lower bound.

\begin{theorem}
Every Boolean function $f$ that depends on $n$ input bits has 
\[
\widetilde{\deg}(f)=\Omega(\log n/\log\log n).
\]
\end{theorem}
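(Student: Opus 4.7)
The plan is to follow the blueprint sketched in the introduction. Switch to the $\{\pm 1\}$ representation and fix a multilinear polynomial $P$ with $\deg(P)=d:=\widetilde{\deg}(f)$ and $\|P-f\|_\infty\leq 1/3$. For each coordinate $i\in[n]$, set $P_i(x):=P(x)-P(x^i)$. Because $\chi_S(x^i)=-\chi_S(x)$ exactly when $i\in S$, the Fourier expansion is
\[
P_i \;=\; 2\sum_{S\ni i}\widehat{P}(S)\chi_S,
\]
so $P_i$ has degree at most $d$, mean zero, and $\|P_i\|_2^2 = 4\sum_{S\ni i}\widehat{P}(S)^2$. Summing and using Parseval together with the bound $|P(x)|\leq 4/3$ gives
\[
\sum_{i=1}^n\|P_i\|_2^2 \;=\; 4\sum_{S}|S|\,\widehat{P}(S)^2 \;\leq\; 4d\|P\|_2^2 \;=\; O(d).
\]
By averaging, some $i$ satisfies $\sigma^2:=\|P_i\|_2^2 \leq O(d/n)$; since $f$ depends on every variable, this $i$ is influential.

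For this fixed $i$ I would sandwich the tail probability $\Pr[|P_i|\geq 4/3]$ from both sides. On the lower side, whenever $x_i$ is sensitive at $x$ one has $|f(x)-f(x^i)|=2$, so the triangle inequality with the $1/3$ approximation error forces $|P_i(x)|\geq 4/3$; combined with (\ref{eq:infapproxdeg}) this gives
\[
\Pr[|P_i|\geq 4/3] \;\geq\; \Inf_i(f) \;\geq\; 2^{-Cd^2}
\]
for an absolute constant $C$. On the upper side, Theorem~\ref{thm:degreedconcentration} applied to $P_i$ with $t:=(4/3)/\sigma=\Omega(\sqrt{n/d})$ yields $\Pr[|P_i|\geq 4/3]\leq\exp\bigl(-(d/2e)\,t^{2/d}\bigr)$. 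Chaining the two estimates and taking logarithms collapses to $t\leq (O(d))^{d/2}$, i.e.\ $n\leq d\cdot(O(d))^{d}$, which gives $\log n = O(d\log d)$ and hence $d=\Omega(\log n/\log\log n)$.

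The main obstacle I anticipate is verifying that the concentration inequality is actually in force: Theorem~\ref{thm:degreedconcentration} requires $t\geq (2e)^{d/2}$, which in our setup amounts to $n\gtrsim d(2e)^d$. If this condition fails then $\log n = O(d)$ already, which is strictly stronger than the bound sought, so that regime is handled by a separate trivial remark. In the complementary regime the hypercontractive concentration does its work, and the two sides of the sandwich meet precisely at the threshold $d\log d\asymp\log n$, which is exactly why the right answer is $\log n/\log\log n$ rather than the $\sqrt{\log n}$ that a Markov-based argument would produce. A minor bookkeeping point is tracking absolute constants (from $\|P\|_\infty\leq 4/3$ and from (\ref{eq:infapproxdeg})) so that they are absorbed cleanly into the final $O(\cdot)$ and $\Omega(\cdot)$.
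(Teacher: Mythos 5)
Your proposal is correct and follows essentially the same route as the paper's proof: the same discrete-derivative decomposition $P_i$, the same Parseval/averaging step to find a coordinate with $\|P_i\|_2^2=O(d/n)$, the lower bound on the tail via Eq.~(\ref{eq:infapproxdeg}), and the upper bound via the hypercontractive concentration of Theorem~\ref{thm:degreedconcentration}, with your case split on whether $t\geq(2e)^{d/2}$ matching the paper's upfront assumption $d\leq\log n/\log\log n$. The remaining differences (factor of $2$ in the definition of $P_i$, threshold $4/3$ versus $1/2$, keeping the range $[-4/3,4/3]$ instead of normalizing) are purely cosmetic.
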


\begin{proof}
Let $P:\mathbb{R}^n\rightarrow[-1,1]$ be a $1/3$-approximating polynomial for $f$
(the assumption that the range is $[-1,1]$ rather than $[-4/3,4/3]$ is for convenience
and does not change anything significant.)
Our goal is to show that $d:=\deg(P)$ is $\Omega(\log n/\log\log n)$.
If $d>\log n/\log\log n$ then we are already done, so assume $d\leq \log n/\log\log n$.

Define $f_i$ by $f_i(x)=(f(x)-f(x^i))/2$ and similarly define $P_i$ by $P_i(x)=(P(x)-P(x^i))/2$.
Note that both $f_i$ and $P_i$ have expectation~0.
We have $f_i(x)\in\{\pm 1\}$ if $i$ is sensitive for $x$, and $f_i(x)=0$ if $i$ is not sensitive for $x$.
Similarly for $P_i$, with an error of up to $1/3$.
Note that $\widehat{P_i}(S)=\widehat{P}(S)$ if $i\in S$ and $\widehat{P_i}(S)=0$ if $i\not\in S$.
Then
$$
\sum_{i=1}^n\norm{P_i}_2^2=\sum_{i=1}^n\sum_S\widehat{P_i}(S)^2=\sum_{i=1}^n\sum_{S\ni i}\widehat{P}(S)^2=\sum_S|S|\widehat{P}(S)^2\leq d \sum_S\widehat{P}(S)^2=d\norm{P}_2^2\leq d.
$$
Hence there exists an $i\in[n]$ for which 
$$
\norm{P_i}_2^2\leq d/n.
$$
Assume $i=1$ for convenience. Because every variable (including $x_1$) is influential,
Eq.~(\ref{eq:infapproxdeg}) implies
\[
\Inf_1(f)\geq 2^{-O(d^2)}.
\]
Define $\sigma^2=\Var[P_1]=\norm{P_1}_2^2\leq d/n$.
Set $t=1/2\sigma\geq \sqrt{n/4d}$. Then $t\geq(2e)^{d/2}$ for sufficiently large $n$,
because we assumed $d\leq \log n / \log\log n$.
Now use Theorem~\ref{thm:degreedconcentration} to get
\begin{align*}
\Inf_1(f) & = \Pr[f_1(x)\in\{\pm 1\}]\\
           & = \Pr[|P_1(x)|\geq 1/2]\\
           & = \Pr[|P_1(x)|\geq t\sigma]\\
           & \leq  \exp\left(-(d/2e)\cdot t^{2/d}\right)\\
           & \leq  \exp\left(-(d/2e)\cdot (n/4d)^{1/d}\right).
\end{align*}
Combining the upper and lower bounds on $\Inf_1(f)$ gives
\[
2^{-O(d^2)} \leq \exp\left(-(d/2e)(n/4d)^{1/d}\right).
\]
Taking logarithms of left and right-hand side and negating gives
\[
O(d^2) \geq (d/2e)(n/4d)^{1/d}.
\]
Dividing by $d$ and using our assumption that $d\leq \log n / \log\log n$ implies, for sufficiently large~$n$:
\[
\log n \geq (n/4d)^{1/d}.
\]
Taking logarithms once more we get 
\[
d\geq\log(n/4d)/\log\log n=\log n / \log\log n - O(1),
\]
which proves the theorem.
\end{proof}

Note that the constant factor in the $\Omega(\cdot)$ is essentially~1 for any constant approximation error.
The $\Omega(\log n / \log\log n)$ bound remains valid even for quite large errors:
the same proof shows that for every constant $\gamma<1/2$, every polynomial $P$ for which $\sgn(P(x))=f(x)$ 
and $|P(x)|\in[1/n^\gamma,1]$ for all $x\in\{\pm 1\}^n$, has degree $\Omega(\log n / \log\log n)$.
This lower bound no longer holds if $\gamma=1$; for example for odd~$n$, the degree-1 polynomial 
$\sum_{i=1}^n x_i/n$ has the same sign as the majority function, and $|P(x)|\in [1/n,1]$ everywhere.

\section{Functions with quantum query complexity $O(\log n / \log\log n)$}\label{secupperbound}

In this section we exhibit two $n$-bit Boolean functions whose bounded-error quantum query complexity 
(and hence approximate degree) is $O(\log n / \log\log n)$.

\begin{theorem}
There is a Boolean function $f:\01^n\rightarrow\01$ that depends on all $n$ variables and has 
\[
Q_2(f) = O\left(\frac{\log n}{\log \log n}\right) .
\]
\end{theorem}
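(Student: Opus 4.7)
Following the blueprint from the introduction, the plan is to construct a \emph{quantum addressing scheme} $A \colon \{0,1\}^m \to [k]$ with $k = d^{\Theta(d)}$ and $m$ at most polynomial in $k$, for a parameter $d$, and then set $f(x,y) := x_{A(y)}$ on $n = k + m$ input bits. The scheme must satisfy: (i) there is a bounded-error quantum algorithm computing $A(y)$ on \emph{every} $y$ using only $d$ queries; (ii) $A$ is surjective onto $[k]$; and (iii) $A$ depends on each of its $m$ input bits. Since $n = d^{\Theta(d)}$ this will give $\log n = \Theta(d \log d)$ and hence $d+1 = O(\log n / \log\log n)$.

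Given such an $A$, the conclusion follows routinely. The function $f$ is total because $A(y)$ always names a valid coordinate of $x$. It depends on each $x_i$ by (ii): pick $y$ with $A(y)=i$, then flipping $x_i$ flips $f(x,y)$. It depends on each $y_j$ by (iii): for any $y$ with $A(y) \neq A(y^j)$, choose $x$ disagreeing on those two coordinates to obtain $f(x,y) \neq f(x,y^j)$. Finally $Q_2(f) \leq d+1$: run the $d$-query algorithm for $A$ on $y$ to obtain an index $i$ with bounded error, then spend one additional query on $x_i$ and output the result.

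The heart of the proof is therefore the construction of $A$. A natural template is to partition $y$ into $d$ blocks and extract $\Theta(\log d)$ bits of information per block using a constant-query (typically single-query) quantum subroutine, so that the joint output across the $d$ blocks lies in an alphabet of size $d^{\Theta(d)}$. Candidate subroutines include a Bernstein--Vazirani-style query on a block representing a parity oracle of a $\Theta(\log d)$-bit secret, or a Grover-style search on a sparsely-structured block.

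The main obstacle will be making the scheme \emph{total}: the natural block subroutines assume a promise on the block (e.g.\ that it really is a parity oracle, or has exactly one marked bit), and on promise-violating inputs the subroutine's output distribution need not concentrate on any single value. Delivering surjectivity, totality, and dependence on every bit simultaneously, while keeping the algorithm bounded-error in $d$ queries, requires either redesigning the subroutine so that every block input produces a concentrated output (which is then \emph{defined} to be the block-level address), or extending $A$ to promise-violating inputs in a way that remains $d$-query computable. Once this delicate interplay of combinatorics and quantum algorithmics is resolved for $A$, the theorem follows immediately from the reduction above.
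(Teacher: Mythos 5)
Your reduction from an addressing scheme to the theorem is correct and is exactly the paper's: set $f(x,y)=x_{a(y)}$, note totality, dependence on all bits, and $Q_2(f)\leq d+1$, and check that $k=d^{\Theta(d)}$ gives $d=O(\log n/\log\log n)$. But that part is routine, as you say yourself; the actual content of the theorem is the existence of the addressing scheme, and your proposal stops precisely there. You name the obstacle (totality: promise-based subroutines like Bernstein--Vazirani or Grover give no guarantee on promise-violating blocks) and list two possible ways out, but you do not carry either one out, so the key lemma --- a $k$-addressing scheme with $k=t^t$ computable with bounded error in $O(t)$ queries --- remains unproved. That is a genuine gap, not a stylistic omission.

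For comparison, the paper resolves the obstacle by the second route you mention, with one concrete extra idea you are missing: after the promise-based decoding produces a candidate address, spend $O(\sqrt{m})$ further queries on a Grover search over \emph{all} $m$ address bits to verify that the input really is the codeword corresponding to that candidate; if a discrepancy is found, output a fixed default address (which is what $a$ is \emph{defined} to be on all non-codeword inputs). Choosing $m=O(t^2)$ makes this verification cost $O(\sqrt{m})=O(t)$, so the total query count stays $O(t)$, and totality and surjectivity are immediate from the definition. The paper gives two instantiations of the decoding step: (i) a code of $k=t^t$ words with pairwise Hamming distance near $m/2$, decoded by preparing $t'=O(t)$ copies of the one-query phase state $\frac{1}{\sqrt{m}}\sum_j(-1)^{x_j}\ket{j}$ and applying a POVM that distinguishes the resulting nearly orthogonal states; (ii) $t$ blocks of Hadamard codewords of $\Theta(\log t)$-bit secrets, decoded by Bernstein--Vazirani on each block (this is your template). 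In both cases it is the Grover verification step, with the block/code lengths tuned so that $\sqrt{m}=O(t)$, that delivers a total function; without specifying that (or an equivalent mechanism) and checking the error probability and query budget, your plan does not yet prove the theorem.
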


\begin{proof}
Let us call a function $a(x_1, \ldots, x_m)$ of $m$ variables $x_1,\ldots,x_m\in\01$ a \emph{$k$-addressing scheme} if 
$a(x_1,\ldots,x_m)\in[k]$ and, for every $i\in[k]$, there exist $x_1,\ldots,x_m\in\01$ such that $a(x_1,\ldots,x_m)=i$. 

\begin{lemma}
\label{lem:address}
For every $t>0$, there exists a $k$-addressing scheme $a(x_1, \ldots, x_m)$
with $k=t^t$ that can be computed with error probability $\leq 1/3$ using $O(t)$ quantum queries.
\end{lemma}

\begin{proof}
In Sections~\ref{sec:add1} and~\ref{sec:add2} we give two constructions of addressing schemes achieving this bound.
\end{proof}

Set $m=t^2$, $k=t^t$, and $n=m+k$.
Without loss of generality, we assume all variables $x_1,\ldots,x_m$
in the $k$-addressing scheme $a(x_1, \ldots, x_m)$ from Lemma~\ref{lem:address} are significant. 
(Otherwise remove the insignificant variables and decrease $m$.)
Define the following $n$-bit Boolean function:
\[ 
f(x_1,\ldots,x_n) = x_{a(x_{k+1},x_{k+2},\ldots,x_{k+m})} .
\]
Then $f$ can be computed with $O(t)+1$ queries and the number of variables is $n>k=t^t$. Hence,
\[ 
\frac{\log n}{\log \log n} \geq \frac{t \log t}{\log t + \log \log t} = (1+o(1)) t.
\] 
\end{proof}

\subsection{Addressing scheme: 1st construction}
\label{sec:add1}

Define the scheme in the following way. 
We select $k=t^t$ words $w^{(i)}$ of $m$ bits each, such that any two distinct words $w^{(i)}$ and $w^{(j)}$ have Hamming distance in the interval $I=[\frac{m}{2} - c t \sqrt{t \log t},\frac{m}{2} + c t \sqrt{t \log t}]$.

One can for example show the existence of such strings using a standard application of the probabilistic method, as follows. Select the $w^{(i)}$ randomly from $\01^m$. 
For distinct $i$ and $j$, the expected Hamming distance between $w^{(i)}$ and $w^{(j)}$ equals $m/2$.  By a Chernoff bound, the probability that this Hamming distance is outside of the interval $I$ is $2^{-\Omega(c^2t^3\log(t)/m)}=2^{-\Omega(c^2t\log t)}$.
If we choose $c$ a sufficiently large constant then this probability is $o(1/{k\choose 2})$.  Since there are ${k\choose 2}$ distinct $i,j$-pairs, the union bound implies that with probability $1-o(1)$, all pairs of words $w^{(i)}$ and $w^{(j)}$ have Hamming distance in the interval~$I$.

For input $x\in\01^m$, define $a(x):=i$ if $x=w^{(i)}$, and $a(x):=1$ if $x$ does not equal any of $w^{(1)},\ldots,w^{(k)}$.
We select $t'=O(t)$ so that
\[ 
\left( \frac{2 c\sqrt{\log t}}{\sqrt{t}} \right)^{t'} \leq \frac{1}{t^{2t}} .
\]
Let
\[ 
\ket{\psi} = \frac{1}{\sqrt{m}} \sum_{j=1}^m(-1)^{x_j} \ket{j} .
\]
Let $\ket{\psi_i}$ be the state $\ket{\psi}$ defined above
if $x=w^{(i)}$. If $i\neq j$, we have 
\[ 
\inp{\psi_i^{\otimes t'}}{\psi_j^{\otimes t'}} = 
\left( \inp{\psi_i}{\psi_j}\right)^{t'} 
\leq \left( \frac{2 c\sqrt{\log t}}{\sqrt{t}} \right)^{t'} \leq \frac{1}{t^{2t}} .
\]

The following lemma is quantum computing folklore.  For the sake of completeness we include a proof in Appendix~\ref{app:proofofqpovm}.

\begin{lemma}
\label{lem:dist}
Let $k\geq 1$ and $\ket{\phi_1},\ldots,\ket{\phi_k}$ be states such that 
$|\inp{\phi_i}{\phi_j}|\leq 1/k^2$ whenever $i\neq j$.
Then there is a measurement that, given $\ket{\phi_i}$, 
produces outcome~$i$ with probability at least $2/3$.
\end{lemma}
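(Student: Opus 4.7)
The plan is to use the Löwdin (symmetric, square-root) orthogonalization applied to the $\ket{\phi_i}$'s. First I would define the Gram matrix $G\in\mathbb{C}^{k\times k}$ by $G_{ij}=\inp{\phi_i}{\phi_j}$. By hypothesis $G_{ii}=1$ and $|G_{ij}|\leq 1/k^2$ for $i\neq j$, so
\[
\|G-I\|_{\mathrm{op}}\;\leq\;\|G-I\|_F\;\leq\;\sqrt{k(k-1)\cdot k^{-4}}\;<\;\frac{1}{k}.
\]
Hence every eigenvalue of $G$ lies in $[1-1/k,\,1+1/k]$; in particular $G$ is positive definite, so the Hermitian matrices $G^{1/2}$ and $G^{-1/2}$ are well defined.

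I would then introduce the orthogonalized vectors
\[
\ket{\tilde\phi_i}\;:=\;\sum_{j=1}^k (G^{-1/2})_{ji}\,\ket{\phi_j},
\]
and check directly from $G^{-1/2}\,G\,G^{-1/2}=I$ that $\inp{\tilde\phi_i}{\tilde\phi_j}=\delta_{ij}$. The measurement I propose is the projective one with outcomes $\ketbra{\tilde\phi_i}{\tilde\phi_i}$ for $i=1,\ldots,k$, completed by one extra ``fail'' projector onto the orthogonal complement of their span. On input $\ket{\phi_i}$ the probability of obtaining outcome $i$ works out to
\[
|\inp{\tilde\phi_i}{\phi_i}|^2 \;=\; \Bigl|\sum_j (G^{-1/2})_{ij}G_{ji}\Bigr|^2 \;=\; \bigl((G^{1/2})_{ii}\bigr)^2.
\]

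The last step, and the only one that needs more than routine bookkeeping, is to lower bound $(G^{1/2})_{ii}$. Writing $G=\sum_\ell\lambda_\ell \ketbra{v_\ell}{v_\ell}$ in its spectral decomposition and letting $\ket{e_i}$ denote the $i$th standard basis vector of $\mathbb{C}^k$, one has
\[
(G^{1/2})_{ii}\;=\;\bra{e_i}G^{1/2}\ket{e_i}\;=\;\sum_\ell \sqrt{\lambda_\ell}\,|\inp{e_i}{v_\ell}|^2,
\]
which is a convex combination of the $\sqrt{\lambda_\ell}$'s; since every $\lambda_\ell\geq 1-1/k$, this gives $(G^{1/2})_{ii}\geq\sqrt{1-1/k}$ and hence success probability at least $1-1/k\geq 2/3$ for $k\geq 3$. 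The cases $k\in\{1,2\}$ can be handled by hand (for $k=1$ output $1$ unconditionally; for $k=2$ the hypothesis $|G_{12}|\leq 1/4$ gives eigenvalues $\geq 3/4$ and the same identity gives success probability $\geq 3/4$). The main point to be careful about is the passage from the spectral bound on $G$ to the bound on the diagonal of $G^{1/2}$, which I handle via the convex-combination identity above; everything else is linear algebra.
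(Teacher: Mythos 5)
Your proof is correct, but it takes a genuinely different route from the paper. You construct the square-root (``pretty good'') measurement: L\"owdin-orthogonalize the $\ket{\phi_i}$ via $G^{-1/2}$ applied to the Gram matrix, measure in the resulting orthonormal basis (plus a fail outcome), and lower-bound the success probability by the diagonal entries of $G^{1/2}$, which you control through the spectral bound $\norm{G-I}_{\mathrm{op}}\leq\norm{G-I}_F<1/k$; all the steps check out, including the convex-combination identity for $(G^{1/2})_{ii}$ and the separate treatment of $k=2$, where the generic bound $1-1/k$ would fall short of $2/3$ and you instead use $|G_{12}|\leq 1/4$. The paper instead avoids any spectral or Gram-matrix analysis: it takes the subnormalized POVM $E_i=\frac{2}{3}\ketbra{\phi_i}{\phi_i}$ together with $E_0=I-\sum_i E_i$, and shows directly by Cauchy--Schwarz estimates on the span of the states that $A=\sum_i\ketbra{\phi_i}{\phi_i}\leq\frac{3}{2}I$, so that $E_0\geq 0$; the success probability is then exactly $2/3$ by construction. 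Your approach buys a stronger conclusion --- a projective measurement with success probability at least $1-1/k$, which tends to $1$ --- and makes the connection to the standard square-root measurement explicit, at the price of needing linear independence (which your eigenvalue bound supplies) and the diagonal bound on $G^{1/2}$. The paper's argument is more elementary and self-contained, needs no matrix square roots, and delivers precisely the $2/3$ claimed in the lemma, which is all the application requires.
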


\noindent
We will apply this lemma to the $k$ states $\ket{\phi_i} = \ket{\psi_i}^{\otimes t'}$.
Our $O(t)$-query quantum algorithm is as follows:
\begin{enumerate}
\item
Use $t'=O(t)$ queries to generate $\ket{\psi}^{\otimes t'}$.
\item
Apply the measurement of Lemma~\ref{lem:dist}.
\item
If the measurement gives some $i\neq 1$, then use Grover's search algorithm~\cite{grover:search,bhmt:countingj} 
(with error probability $\leq 1/3$) to search for $j\in[m]$ such that $x_j \neq w^{(i)}_j$.
\item
If no such $j$ is found, then output~$i$. Else output~1.
\end{enumerate}
The number of queries is $O(t)$ to generate $\ket{\psi}^{\otimes t'}$ and $O(\sqrt{m})=O(t)$ for Grover search, so $O(t)$ in total.

If the input $x$ equals some $w^{(i)}$,
then the measurement of Lemma~\ref{lem:dist} will produce the correct~$i$ with
probability at least $2/3$ and Grover search will not find $j$ s.t.~$x_j \neq w^{(i)}_j$.
Hence, the whole algorithm will output~$i$ with probability at least $2/3$.
If the input $x$ is not equal to any $w^{(i)}$,
then the measurement will produce some $i$ but Grover search will find $j$ s.t.~$x_j \neq w^{(i)}_j$, with probability at least $2/3$. 
As a result, the algorithm will output the correct answer~1 with probability at least $2/3$ in this case.

\subsection{Addressing scheme: 2nd construction}
\label{sec:add2}

Our second addressing scheme is based on the Bernstein-Vazirani algorithm~\cite{bernstein&vazirani:qcomplexity}.
For a string $z\in\01^s$, let $h(z)$ be its $2^s$-bit Hadamard codeword: $h(z)_j=z\cdot j$ mod~2, where $j$ ranges over all indices $\in\01^s$,
and $z\cdot j$ denotes the inner product of the two $s$-bit strings $z$ and $j$. 
The Bernstein-Vazirani algorithm recovers $z$ with probability~1 using only one quantum query if its $2^s$-bit input is of the form $h(z)$.
For our addressing scheme, we set $s=\log\log k-\log\log\log k$ and $t=(\log k)/s$ (assume for simplicity these numbers are integers).
Note that $k=t^{(1+o(1))t}$.
The $m$-bit input $x$ to the addressing scheme consists of $t$ blocks $x^{(1)},\ldots,x^{(t)}$ of $2^s$ bits each, so $m=t2^s=O(t^2)$.
Define the addressing scheme as follows:
\begin{quote}
If $x$ is of the form $h(z^{(1)})\ldots h(z^{(t)})$ then set $a(x):=z^{(1)}\ldots z^{(t)}$.
Otherwise set $a(x):=0^{\log k}$.
\end{quote}
Note that the value of $a(x)$ is a $\log k$-bit string, and that the function is surjective. 
Hence, identifying $\01^{\log k}$ with $[k]$, the function~$a$ addresses a space of $k$ bits.

The following algorithm computes $a(x)$ with $O(t)$ quantum queries:
\begin{enumerate}
\item Use the Bernstein-Vazirani algorithm $t$ times, once on each $x^{(j)}$, computing $z^{(1)},\ldots,z^{(t)}\in\01^s$.
\item Use Grover~\cite{grover:search,bhmt:countingj} to check if $x=x^{(1)}\ldots x^{(t)}$ equals the $m$-bit string $h(z^{(1)})\ldots h(z^{(t)})$.
\item If yes, output $a(x)=z^{(1)}\ldots z^{(t)}$. Else output~$0^{\log k}$.
\end{enumerate}
The query complexity is $t$~queries for the first step and $O(\sqrt{m})=O(t)$ for the second.

If the input $x$ is the concatenation of $t$ Hadamard codewords $h(z^{(1)}),\ldots,h(z^{(t)})$, 
then the first step will identify the correct $z^{(1)},\ldots,z^{(t)}$ with probability~1, and the second step will not find any discrepancy.
On the other hand, if the input is not the concatenation of $t$ Hadamard codewords then the two strings compared in step~2 are not equal,
and Grover search will find a discrepancy with probability at least $2/3$, in which case the algorithm outputs the correct value $0^{\log k}$.

\section{Conclusion}

We gave an optimal answer to the question how low approximate degree and bounded-error quantum query complexity can be for total Boolean functions depending on $n$ bits.
We proved a general lower bound of $\Omega(\log n / \log\log n)$,
and exhibited two functions where this bound is achieved.
The latter upper bounds are obtained by variations of the address function that are suitable for quantum algorithms.

\paragraph{Acknowledgements.}
Eq.~(\ref{eq:infapproxdeg}) was observed in email discussion between RdW and Scott Aaronson in 2008.
We thank Art\={u}rs Ba\v{c}kurs, Oded Regev, Mario Szegedy, and the anonymous CCC referees for useful comments and references.

\bibliographystyle{alpha}

\begin{thebibliography}{BHMT02}

\bibitem[ABSW13]{absw:nover2}
A.~Ambainis, A.~Ba\v{c}kurs, J.~Smotrovs, and R.~{de} Wolf.
\newblock Optimal quantum query bounds for almost all {B}oolean functions.
\newblock In {\em Proceedings of 30th Annual Symposium on Theoretical Aspects
  of Computer Science (STACS'13)}, pages 446--453, 2013.
\newblock quant-ph/1208.1122.

\bibitem[Amb06]{ambainis:degreevsqueryj}
A.~Ambainis.
\newblock Polynomial degree {vs.} quantum query complexity.
\newblock {\em Journal of Computer and System Sciences}, 72(2):220--238, 2006.
\newblock Earlier version in FOCS'03. quant-ph/0305028.

\bibitem[AS04]{aaronson&shi:collision}
S.~Aaronson and Y.~Shi.
\newblock Quantum lower bounds for the collision and the element distinctness
  problems.
\newblock {\em Journal of the ACM}, 51(4):595--605, 2004.

\bibitem[BBC{\etalchar{+}}01]{bbcmw:polynomialsj}
R.~Beals, H.~Buhrman, R.~Cleve, M.~Mosca, and R.~{de} Wolf.
\newblock Quantum lower bounds by polynomials.
\newblock {\em Journal of the ACM}, 48(4):778--797, 2001.
\newblock Earlier version in FOCS'98. quant-ph/9802049.

\bibitem[Bei93]{beigel:poly}
R.~Beigel.
\newblock The polynomial method in circuit complexity.
\newblock In {\em Proceedings of the 8th {IEEE} Structure in Complexity Theory
  Conference}, pages 82--95, 1993.

\bibitem[BHMT02]{bhmt:countingj}
G.~Brassard, P.~H{\o}yer, M.~Mosca, and A.~Tapp.
\newblock Quantum amplitude amplification and estimation.
\newblock In {\em Quantum Computation and Quantum Information: A Millennium
  Volume}, volume 305 of {\em AMS Contemporary Mathematics Series}, pages
  53--74. 2002.
\newblock quant-ph/0005055.

\bibitem[BV97]{bernstein&vazirani:qcomplexity}
E.~Bernstein and U.~Vazirani.
\newblock Quantum complexity theory.
\newblock {\em SIAM Journal on Computing}, 26(5):1411--1473, 1997.
\newblock Earlier version in STOC'93.

\bibitem[BW01]{buhrman&wolf:qcclower}
H.~Buhrman and R.~{de} Wolf.
\newblock Communication complexity lower bounds by polynomials.
\newblock In {\em Proceedings of 16th IEEE Conference on Computational
  Complexity}, pages 120--130, 2001.
\newblock cs.CC/9910010.

\bibitem[BW02]{buhrman&wolf:dectreesurvey}
H.~Buhrman and R.~{de} Wolf.
\newblock Complexity measures and decision tree complexity: A survey.
\newblock {\em Theoretical Computer Science}, 288(1):21--43, 2002.

\bibitem[Dam98]{dam:oracle}
W.~{van} Dam.
\newblock Quantum oracle interrogation: Getting all information for almost half
  the price.
\newblock In {\em Proceedings of 39th IEEE FOCS}, pages 362--367, 1998.
\newblock quant-ph/9805006.

\bibitem[DFKO07]{dfko:fouriertails}
I.~Dinur, E.~Friedgut, G.~Kindler, and R.~{O'Donnell}.
\newblock On the {F}ourier tails of bounded functions over the discrete cube.
\newblock {\em Israel Journal of Mathematics}, 160(1):389--412, 2007.
\newblock Earlier version in STOC'06.

\bibitem[DW11]{drucker&wolf:qproofs}
A.~Drucker and R.~{de} Wolf.
\newblock Quantum proofs for classical theorems.
\newblock {\em Theory of Computing}, 2011.
\newblock ToC Library, Graduate Surveys 2.

\bibitem[Gro96]{grover:search}
L.~K. Grover.
\newblock A fast quantum mechanical algorithm for database search.
\newblock In {\em Proceedings of 28th ACM STOC}, pages 212--219, 1996.
\newblock quant-ph/9605043.

\bibitem[Jan97]{janson:gaussian}
S.~Janson.
\newblock {\em Gaussian Hilbert Spaces}, volume 129 of {\em Cambridge Tracts in
  Mathematics}.
\newblock Cambridge University Press, 1997.

\bibitem[K{\v{S}}W07]{ksw:dpt-siam}
H.~Klauck, R.~{\v{S}}palek, and R.~{de} Wolf.
\newblock Quantum and classical strong direct product theorems and optimal
  time-space tradeoffs.
\newblock {\em SIAM Journal on Computing}, 36(5):1472--1493, 2007.
\newblock Earlier version in FOCS'04. quant-ph/0402123.

\bibitem[NS94]{nisan&szegedy:degree}
N.~Nisan and M.~Szegedy.
\newblock On the degree of {B}oolean functions as real polynomials.
\newblock {\em Computational Complexity}, 4(4):301--313, 1994.
\newblock Earlier version in STOC'92.

\bibitem[{O'D}07]{odonnell:notes}
R.~{O'Donnell}.
\newblock Lecture notes for a course ``{A}nalysis of {B}oolean functions'',
  2007.
\newblock Available at http://www.cs.cmu.edu/\~{ }odonnell/boolean-analysis/.

\bibitem[{O'D}08]{odonnell:survey}
R.~{O'Donnell}.
\newblock Some topics in analysis of boolean functions.
\newblock Technical report, ECCC Report TR08--055, 2008.
\newblock Paper for an invited talk at STOC'08.

\bibitem[She08]{sherstov:dualpoly}
A.~Sherstov.
\newblock Communication lower bounds using dual polynomials.
\newblock {\em Bulletin of the EATCS}, 95:59--93, 2008.

\bibitem[Sim83]{simon:pram}
H.~U. Simon.
\newblock A tight {$\Omega(\log\log n)$-bound} on the time for parallel {RAM's}
  to compute non-degenerate {B}oolean functions.
\newblock In {\em Symposium on Foundations of Computation Theory}, volume 158
  of {\em Lecture Notes in Computer Science}, pages 439--444. Springer, 1983.

\bibitem[Wol08]{wolf:fouriersurvey}
R.~{de} Wolf.
\newblock A brief introduction to {F}ourier analysis on the {B}oolean cube.
\newblock {\em Theory of Computing}, 2008.
\newblock ToC Library, Graduate Surveys 1.

\end{thebibliography}

\newcommand{\etalchar}[1]{$^{#1}$}

\appendix

\section{Proof of Lemma~\ref{lem:dist}}\label{app:proofofqpovm}

The lemma is obvious for $k=1$, so we can assume $k\geq 2$.
Let Hilbert space $\cal H$ be the span of the states $\ket{\phi_1},\ldots,\ket{\phi_k}$, and define $A=\sum_{i=1}^k\ketbra{\phi_i}{\phi_i}$ as an operator on this space. We want to show that $A$ is close to the identity operator on~$\cal H$. 
We first show that $A\ket{\phi_j}$ is close to $\ket{\phi_j}$ for all $j\in[k]$.
Define $\ket{\delta_j}=A\ket{\phi_j}-\ket{\phi_j}$.
We have 
$$
\norm{\delta_j}=\norm{\sum_{i\in[k]\backslash\{j\}}\ketbra{\phi_i}{\phi_i}\ket{\phi_j}}\leq\sum_{i\in[k]\backslash\{j\}}|\inp{\phi_i}{\phi_j}|\leq \frac{k-1}{k^2}.
$$
Now we show $A\ket{v}$ is close to $\ket{v}$ for an arbitrary unit vector $\ket{v}=\sum_{j=1}^k\alpha_j\ket{\phi_j}$ in~$\cal H$.
Define $a:=\sum_{j=1}^k|\alpha_j|^2$. We have
$$
1=\inp{v}{v}=\sum_{i,j=1}^k\alpha^*_i\alpha_j\inp{\phi_i}{\phi_j}=a + \sum_{i\neq j}\alpha^*_i\alpha_j\inp{\phi_i}{\phi_j}.
$$
Also, using the Cauchy-Schwarz inequality,
$$
\sum_{i\neq j}\alpha^*_i\alpha_j\inp{\phi_i}{\phi_j}
\leq \sqrt{\sum_{i\neq j}|\alpha_i|^2|\alpha_j|^2}\sqrt{\sum_{i\neq j}|\inp{\phi_i}{\phi_j}|^2}
\leq \sqrt{\sum_{i,j}|\alpha_i|^2|\alpha_j|^2}\sqrt{\sum_{i,j} 1/k^4}
=a/k.
$$
This implies $1\geq a-a/k$ and hence $a\leq 1/(1-1/k)=k/(k-1)$.
We have 
$$
A\ket{v}=\sum_{j=1}^k \alpha_jA\ket{\phi_j}=\sum_{j=1}^k \alpha_j(\ket{\phi_j}+\ket{\delta_j})=\ket{v}+\sum_{j=1}^k \alpha_j\ket{\delta_j}.
$$
This implies, again using Cauchy-Schwarz,
$$
\norm{A\ket{v}-\ket{v}}\leq \sum_{j=1}^k \alpha_j\norm{\delta_j}\leq \sqrt{\sum_{j=1}^k|\alpha_j|^2}\sqrt{\sum_{j=1}^k\norm{\delta_j}^2}\leq \sqrt{\frac{k}{k-1}}\sqrt{\frac{k(k-1)^2}{k^4}}=\sqrt{\frac{k-1}{k^2}}\leq \frac{1}{2}.
$$
Hence $A\leq\frac{3}{2}I$.

Our measurement will consist of the operators $E_i=\frac{2}{3}\ketbra{\phi_i}{\phi_i}$ for all $i\in [k]$, and $E_0=I-\sum_{i=1}^k E_i$. By the previous discussion $E_0=I-\frac{2}{3}A\geq 0$, so this is a well-defined measurement (more precisely, a POVM).
Given state $\ket{\phi_i}$, $i\in[k]$, the probability that our measurement produces the correct outcome~$i$ equals
$\Tr(E_i\ketbra{\phi_i}{\phi_i})=2/3$.

\end{document}